\documentclass[12pt,letterpaper,top=2.5cm,bottom=2.5cm,left=3cm,right=2.5cm,marginparwidth=1.75cm]{article}
\usepackage{arxiv}

\usepackage{graphicx}      

\usepackage{amsmath} 
\usepackage{amssymb}  
\usepackage{amsthm}

\usepackage{tikz}
\usetikzlibrary{arrows.meta, positioning, calc}

\newtheorem{propp}{Proposition}
\newtheorem{remm}{Remark}
\newtheorem{assm}{Assumption}
\usepackage{booktabs}
\usepackage{array}
\usepackage[table]{xcolor}
\usepackage{caption}
\usepackage{adjustbox}
\usepackage{graphicx}

\usepackage{caption}
\usepackage{subcaption}
\usepackage{textcomp}
\usepackage{url}
\usepackage{colortbl}
\usepackage{soul}
\usepackage{multirow}
\usepackage{float}
\usepackage{physics}
\usepackage{pifont}
\usepackage{color}
\usepackage{alltt}
\usepackage[hidelinks]{hyperref}
\usepackage{enumerate}
\usepackage{siunitx}
\usepackage{breakurl}
\usepackage{epstopdf}
\usepackage{pbox}
\usepackage{stfloats}
\usepackage[vlined,linesnumberedhidden,ruled,resetcount]{algorithm2e}
\usepackage{algpseudocode}  
\usepackage{cite}
\usepackage{amsmath, amsfonts}\usepackage{graphicx}

\usepackage{caption}
\usepackage{subcaption}
\usepackage{textcomp}
\usepackage{url}
\usepackage{colortbl}
\usepackage{multirow}
\usepackage{float}
\usepackage{physics}
\usepackage{pifont}
\usepackage{color}
\usepackage{alltt}
\usepackage[hidelinks]{hyperref}
\usepackage{enumerate}
\usepackage{siunitx}
\usepackage{breakurl}
\usepackage{epstopdf}
\usepackage{pbox}
\usepackage{stfloats}
\usepackage[vlined,linesnumberedhidden,ruled,resetcount]{algorithm2e}
\usepackage{algpseudocode}  
\usepackage{cite}
\usepackage{amsmath, amsfonts}

\allowdisplaybreaks

\usepackage{titlesec}
\usepackage{array}     
\usepackage{diagbox}   
\usepackage{booktabs}  

\begin{document}

\title{Real-Time Regulation of Direct Ink Writing Using Model Reference Adaptive Control} 

\author{Mandana Mohammadi Looey, 
Amrita Basak, 
and Satadru Dey
\thanks{The authors are with the Department of Mechanical Engineering, The Pennsylvania State University, University Park, Pennsylvania 16802, USA. (e-mails: mfm6970@psu.edu, aub1526@psu.edu, skd5685@psu.edu).}
\thanks{This work was supported by National Science Foundation under Grants No. 2346650. The opinions, findings, and conclusions or recommendations expressed are those of the author(s) and do not necessarily reflect the views of the National Science Foundation.}}

\maketitle



\begin{abstract}                

Direct Ink Writing (DIW) has gained attention for its potential to reduce printing time and material waste. However, maintaining precise geometry and consistent print quality remains challenging under dynamically varying operating conditions. This paper presents a control-focused approach using a model reference adaptive control (MRAC) strategy based on a reduced-order model (ROM) of extrusion-based 3D printing for a candidate cementitious material system. The proposed controller actively compensates for uncertainties and disturbances by adjusting process parameters in real time, with the objective of minimizing reference-tracking errors. Stability and convergence are rigorously verified via Lyapunov analysis, demonstrating that tracking errors asymptotically approach zero. Performance evaluation under realistic simulation scenarios confirms the effectiveness of the adaptive control framework in maintaining accurate and robust extrusion behavior.

\end{abstract}



\section{Introduction}


Direct Ink Writing (DIW) is an extrusion‑based 3D printing technique in which a viscoelastic ink is extruded through a nozzle along a programmed tool path onto a build plate. Experimental and numerical studies have shown that printing parameters, including nozzle speed, extrusion rate, and the gap between the nozzle and the build plate, strongly influence buildability and the geometric accuracy of printed features \cite{apsari2025influence}, \cite{imran2023buildability},\cite{scalise2026multiphase} and are critical for process monitoring and control in concrete 3D printing \cite{math11061499}. Achieving stable and consistent ink deposition, as well as the desired geometry and print quality, therefore requires real-time closed-loop control of these parameters \cite{ma14020337}. In this context, model-based adaptive control strategies can compensate for uncertainties and disturbances, enabling in-situ adjustment of process parameters to maintain accurate reference tracking and robust print performance.




Several efforts in the literature have addressed the design of closed-loop controllers and control strategies for extrusion processes. Zomorodi et al. proposed a hierarchical control structure with an explicit model predictive controller that accounts for both extrusion force and ram velocity \cite{zomorodi2016extrusion}. To mitigate under-extrusion and over-extrusion, which lead to inconsistent ink deposition in DIW, Li et al. developed a data-driven reinforcement learning (RL) controller \cite{li2025process}. Kajzr et al. emphasized the need for a dedicated control strategy for 3D concrete printing by implementing an open PLC-based control system \cite{robotics12040096}. Adaptive control approaches have also been explored in extrusion processes. Early work by Guo et al. introduced a fuzzy control scheme with self-adaptation capability to maintain desired product dimensions through real-time monitoring and regulation of polymer extrusion \cite{guo1993polymer}. Perera et al. developed an adaptive controller based on extremum-seeking control to regulate melt pressure via screw rotational speed adjustment \cite{perera2024adaptive}, while Zhao et al. designed a low-order adaptive control scheme to manage extrusion force in the Freeze-form extrusion process \cite{zhao2010adaptive}. Although these studies demonstrate various strategies for achieving target geometries in extrusion-based additive manufacturing, real-time model adaptation and closed-loop control specifically for cementitious extrusion remain largely unexplored.

Motivated by this research gap, this study introduces a generalized adaptive control framework for DIW, demonstrated using cementitious materials as a representative candidate material to account for the inherent uncertainty and trajectory-dependent dynamics of extrusion. The approach is formulated as a model reference adaptive control (MRAC) scheme \cite{narendra2012stable} and follows a mathematical framework similar to that in \cite{dey2015nonlinear}. The proposed control architecture leverages a coupled mathematical model of the extrusion nozzle and build plate motion to represent system dynamics accurately. Thereafter, it employs a parameter adaptation law that updates the feedback-based closed-loop control effort in real time to compensate for fluctuations in build plate speed and inlet mass flow, with the goal of maintaining the desired strand geometry.

Stability of the overall control architecture is established via Lyapunov-based analysis and Barbalat’s lemma, providing theoretical conditions for control and adaptation gains to ensure convergence and robustness. Simulation studies under realistic operating scenarios further demonstrate the effectiveness of the proposed framework in maintaining accurate geometry and consistent extrusion performance. Consequently, this study advances process control in extrusion-based 3D printing of cementitious materials and provides a foundation for the integration of more advanced adaptive and model-based control strategies in large-scale cement printing applications.

The structure of the paper is as follows: Section 2 presents the system model and formalizes the control problem. Section 3 details the adaptive control framework and the control design methodology. Section 4 provides model validation, simulation results, and their analysis. Finally, Section 5 concludes the study and highlights future research directions.

\section{Problem Statement: Control of Extrusion-based 3D Printing Processes}
In this section, we present the basic control architecture, the control-oriented modeling approach, and the formulation of the adaptive control strategy for extrusion-based 3D printing processes.

\subsection{Basic Architecture}
The extrusion-based 3D printing system considered in this study consists of three main sub-systems, as depicted in Fig.~\ref{fig:system_architecture}. Sub-system 1 represents the material feeding mechanism that extrudes the build material. The input to this sub-system is the inlet flow rate, $\dot{m}$, introduced to the nozzle by an external force, and the output is the average flow velocity across the nozzle cross-section at its outlet. Sub-system 2 captures the bending and swelling behavior of the extruded material. Its input is the output of Sub-system 1. This sub-system connects the dynamics of the nozzle flow with the post-deposition dynamics of the strand on the build plate. Sub-system 3 represents the post-deposition behavior of the strand on a build plate moving at velocity $U_s$. Its inputs are the build plate velocity $U_s$ and the output of Sub-system 2. The output is the average flow velocity of the deposited strand on the moving plate.

\begin{figure}[h!]
    \centering
    \includegraphics[width=0.45\textwidth]{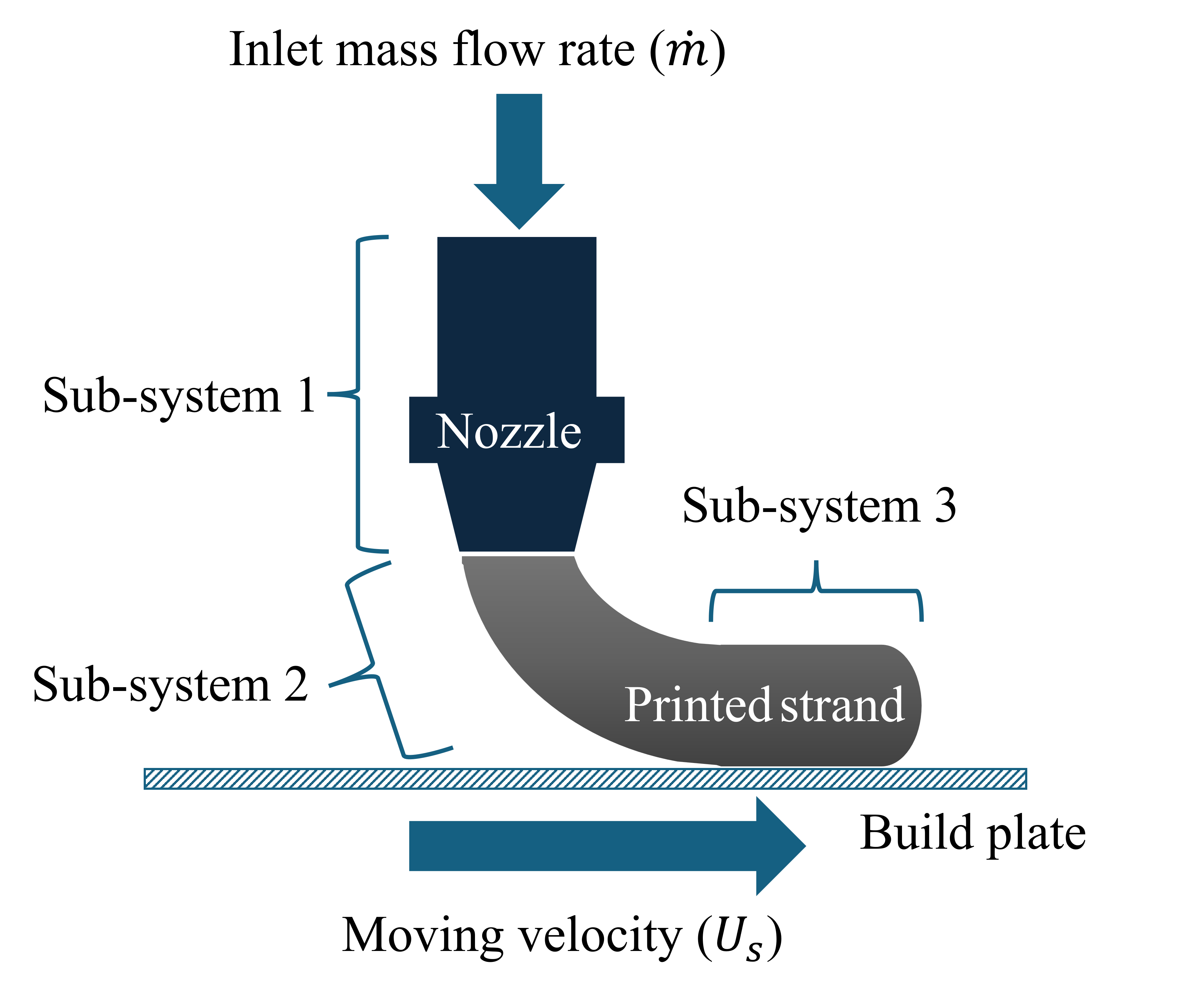}
    \caption{Schematic of the extrusion-based 3D printing process including three sub-systems. }
    \label{fig:system_architecture}
\end{figure}



\subsection{Control-Oriented Model of Extrusion-Based Printing}

We adopt a control-oriented model of the extrusion-based 3D printing system developed in our prior work \cite{looey2025physics}. The model captures the coupled dynamics of material feed, film bending and swelling, and material deposition on the moving build plate. The model is given by:
\begin{align}
&{\dot{\bar{v}}_1}
= \bar{\beta}_{1}\,{p_d}_1
+ \bar{\beta}_{2}\,\bar{v}_1
+ \bar{\beta}_{3}\,\dot{m},  \label{rom-1}\\
&\bar u_{2}= \bar{\beta}_4 \bar v_{1}, \label{rom-2}\\
&{\dot{\bar{u}}_3}
= \bar{\beta}_5 \bar u_2
+\bar{\beta}_6\,\bar{u}_{3}\,+ \bar{\beta}_7\, U_s, \label{rom-3}
\end{align}
where $\bar{v}_1$ is the average flow velocity in the nozzle or sub-system 1, $p_{d_1}$ is the pressure gradient inside the nozzle, $\bar{u}_2$ is the average flow velocity at the outlet of sub-system 2, $\bar{u}_3$ is the output average flow velocity of sub-system 3. Here, the model parameters, $\bar{\beta}$'s are given as functions of the feed rate $\dot{m}$ and plate velocity $U_s$ \cite{looey2025physics}.
\begin{align}
    & \bar{\beta}_1 = \bar{\gamma}_1 \dot{m}, \ \bar{\beta}_2 = \bar{\gamma}_2 \dot{m}, \ \bar{\beta}_3 = \bar{\gamma}_3 \dot{m}, \label{param-1}\\
    & \bar{\beta}_5 = \bar{\gamma}_5 U_s, \ \bar{\beta}_6 = \bar{\gamma}_6 U_s, \ \bar{\beta}_7 = \bar{\gamma}_7 U_s. \label{param-2}
\end{align}


Next, we apply the following reformulation to the model \eqref{rom-1}-\eqref{rom-3}: (i) we incorporate algebraic form of sub-system 2 into the dynamics of sub-system 3, and (ii) instead of input-dependent model parameters, we incorporate constant model parameters and additive uncertainty terms which capture the input-dependence characteristics. Accordingly, the reformulated model becomes:
\begin{align}
&{\dot{\bar{v}}_1}
= \beta_{1}\,{p_d}_1
+ \beta_{2}\,\bar{v}_1
+ \beta_{3}\,(\dot{m} + \Delta_1),  \label{romre-1}\\
&{\dot{\bar{u}}_3}
= \beta_5 \beta_4 \bar v_{1}
+\beta_6\,\bar{u}_{3}\,+ \beta_7\, (U_s + \Delta_3), \label{romre-3}
\end{align}
where $\bar{v}_1$ and $\bar{u}_3$ are now the dynamic states representing the printing system, $\dot{m}$ and $U_s$ are the inputs, and $\Delta_1$ and $\Delta_3$ are the uncertainties.

\vspace{1mm}

\begin{remm}
The uncertainty terms $\Delta_1$ and $\Delta_3$ in \eqref{romre-1}-\eqref{romre-3} mainly account for the input dependent nature of the model parameters. 
This reformulation essentially converts a linear time-varying model into a linear time-invariant model with additive uncertainty, ultimately enabling us to apply linear time-invariant adaptive control design techniques. We also note that $\Delta_1$ and $\Delta_3$ in \eqref{romre-1}-\eqref{romre-3} can potentially capture the effects of other unmodeled dynamics and parametric uncertainties arising from model reduction, modeling assumptions, numerical approximations,  neglected higher-order dynamics, and variations in operating conditions \cite{lopez2016identifying}.
\end{remm}

\subsection{Adaptive Control Problem in Extrusion-Based Printing}
While formulating the adaptive control problem, we first make the following assumptions: 


\vspace{1mm}

\begin{assm}
    The mass flow rate $\dot{m}$ and the plate velocity $U_s$ both serve as control inputs which can be manipulated to achieve the control objective.
\end{assm}

\vspace{1mm}

\begin{assm}
    The velocities $\bar{v}_1$ and $\bar{u}_3$ both are available to be used as feedback signals. This can be achieved either by using on-board sensors or by using real-time state estimators \cite{GREEFF201731}. 
\end{assm}

Under this setting, the adaptive control objective can be written as: \emph{Using sensor feedback (Assumption 2), actuation capabilities (Assumption 1), and the control-oriented model \eqref{romre-1}-\eqref{romre-3} -- design a controller that adapts to uncertainties $\Delta_1$ and $\Delta_3$, and ensures that the dynamical states of the extrusion-based 3D printing system follow the desired velocity trajectories -- ultimately enabling the following physically essential regulations:}
\begin{enumerate}
    \item \textbf{Material flow regulation:} Ensure that the actual material flow rate, which results in the flow velocity $\bar{v}_1$ in the nozzle, tracks the desired flow rate and the consequent desired flow velocity  $v_{r_1}$ generated from a reference model, with minimal error, despite the uncertainties in inlet mass flow rate.
    \item \textbf{Filament velocity regulation:} Maintain the flow velocity on the moving plate ($\bar{u}_3$) close to the desired flow velocity $u_{r_3}$ generated from a reference model, despite the uncertainties in plate movement.
    \item \textbf{Coupled sub-system coordination:} Compensate for interactions between the inlet mass flow rate and the plate velocity through a unified control law that combines nominal and adaptive components.
\end{enumerate}

\section{Adaptive Control Framework for Extrusion-based 3D Printing Processes}


In this section, we discuss the detailed control architecture and the theoretical design of adaptive control algorithm.

\subsection{Adaptive Control Architecture}
Here, we adopt an MRAC approach \cite{narendra2012stable} and the corresponding schematic for the extrusion-based process is given in Fig. \ref{fig:adaptive_control_block_diagram}. The architecture consists of an inner loop and an outer loop. In the inner loop, the error signals between the 3D printer outputs ($\bar{v}_1,\bar{u}_3$) and the model reference signals (${v_r}_1,{u_r}_3$) are computed, which in turn are used by the controller to generate control signals ($\dot{m},U_s$) to be sent to the 3D printer. In the outer loop, reference signals are generated using a reference model, and an adjustment mechanism system (in the form of parameter adaptation law) dynamically updates the control law. 
\begin{figure}[h!]
    \centering
    \includegraphics[width=0.6\textwidth]{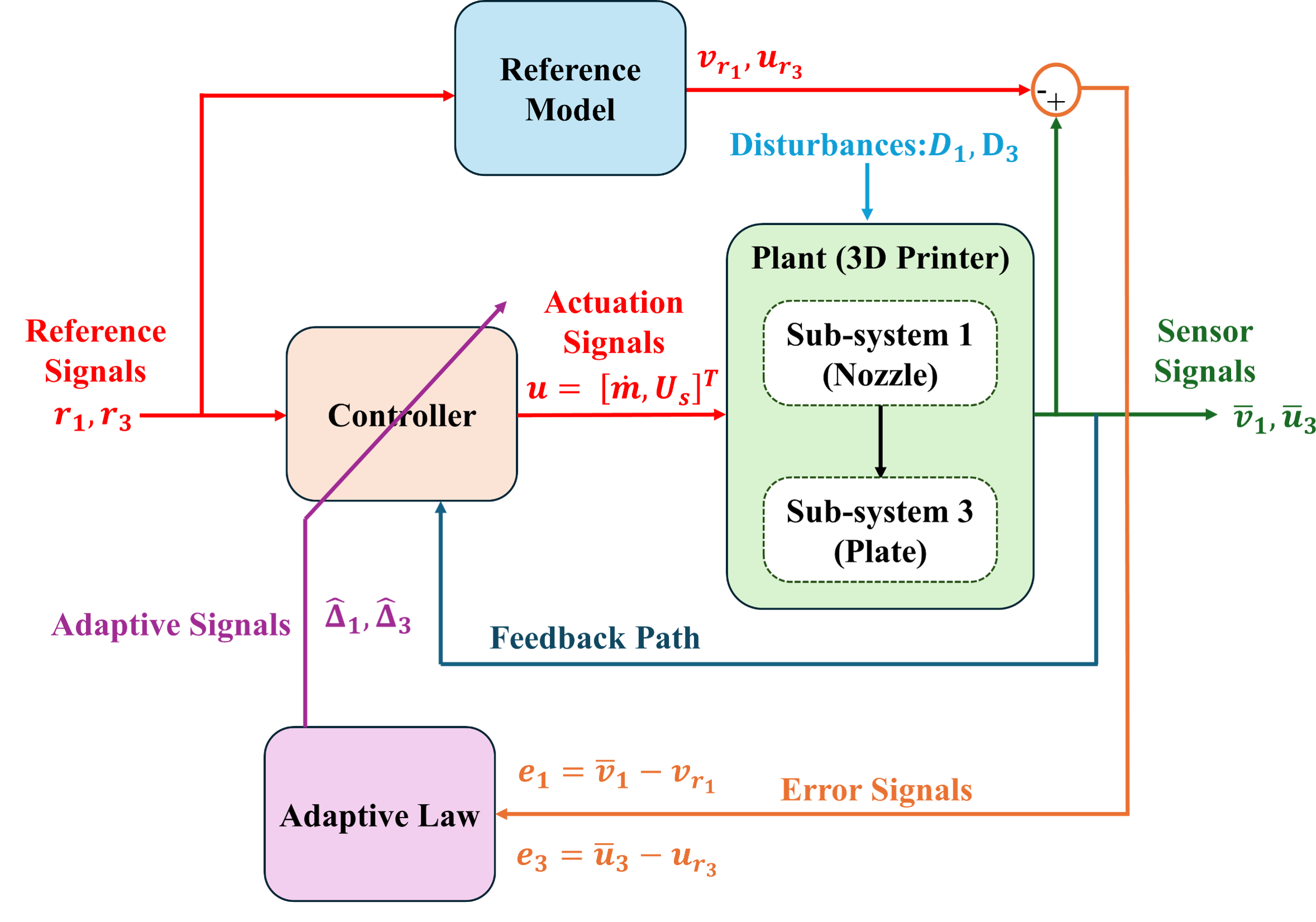}
    \caption{A schematic of model reference adaptive control of extrusion-based 3D printing process.}
    \label{fig:adaptive_control_block_diagram}
\end{figure}

\subsection{Model Reference Adaptive Control Design and Analysis}




The design of MRAC includes designing a feedback control law, a parameter adaptation law, and analyzing the convergence of the closed-loop system under these laws. To this end, we choose the following laws:

\begin{align}
  \begin{split}
    &\text{Control Law }
    \left\{
      \begin{aligned}
        \dot m &= -  k_1 \bar v_1 + r_1 - ({\beta_2}/{\beta_3})p_{d1} - \hat{\Delta}_1,\\ 
        U_s &= -  k_3 \bar u_3 + r_3 - ({\beta_4\beta_6}/{\beta_7})\bar v_1 - \hat{\Delta}_3, \label{sfb-3}
      \end{aligned}
    \right.
  \end{split}
\end{align}
where $k_1$ and $k_3$ are feedback control gains to be designed, $r_1$ and $r_3$ are reference signals, and $\hat{\Delta}_1$ and $\hat{\Delta}_3$ are adaptive parameter estimates from adaptation law given by:
\begin{align}
  \begin{split}
    &\text{Adaptation Law }
    \left\{
      \begin{aligned}
        \dot{\hat \Delta}_1 &= \gamma_1\, p_1\, \beta_3\, (\bar{v}_1-{v_r}_1),\\ 
        \dot{\hat \Delta}_3 &= \gamma_3\, p_3\, \beta_7\, (\bar{u}_3-{u_r}_3), \label{ad-3}
      \end{aligned}
    \right.
  \end{split}
\end{align} 
where $\gamma_1,\gamma_3>0$ are learning rates to be designed, $p_1,p_3>0$ are weight constants to be used later in analysis, and ${v_r}_1,{u_r}_3$ are model reference signals.

Next, the reference model for MRAC is chosen as:
\begin{align}
  \begin{split}
    &\text{Reference Model }
    \left\{
      \begin{aligned}
        & \dot{v}_{r_1} = (\beta_2-\beta_3k_1)v_{r_1} + \beta_3r_1, \\
    & \dot{u}_{r_3} = (\beta_6-\beta_7k_3)u_{r_3} + \beta_7r_3. \label{rm-3}
      \end{aligned}
    \right.
  \end{split}
\end{align} 

In the next proposition, we derive the conditions for achieving the control objectives in terms of bounded error.

\vspace{1mm}

\begin{propp} [Boundedness of reference model tracking and parameter estimation error signals]
Consider the extrusion-based 3D printing system described by \eqref{romre-1}-\eqref{romre-3}, the state feedback control law \eqref{sfb-3}, the parameter adaptation law \eqref{ad-3}, and the reference model \eqref{rm-3}. Then, the reference model tracking error signals $e_1 \triangleq \bar{v}_1 - {v_r}_1$ and $e_3 \triangleq \bar{u}_3 - {u_r}_3$, and the parameter estimation error signals $\tilde{\Delta}_1 \triangleq \Delta_1 - \hat{\Delta}_1$ and $\tilde{\Delta}_3 \triangleq \Delta_3 - \hat{\Delta}_3$ will remain bounded for all time $t\geqslant 0 $, if the following conditions are satisfied:
\begin{align}
    & (\beta_2-\beta_3k_1) < 0, \ (\beta_6-\beta_7k_3) < 0. \label{bc}
\end{align}
\end{propp}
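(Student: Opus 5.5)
The plan is a standard MRAC Lyapunov argument: write the closed-loop error dynamics for each channel, choose a quadratic Lyapunov function in the tracking and estimation errors, and show that its derivative is negative semidefinite under \eqref{bc}. Throughout I will treat $\Delta_1$ and $\Delta_3$ as constant (or at least piecewise constant) unknown parameters, so that $\dot{\tilde\Delta}_i = -\dot{\hat\Delta}_i$. This is the usual MRAC assumption, and it is needed for the argument below.

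\textbf{Step 1 (closed-loop error dynamics).} Substitute the control law \eqref{sfb-3} into \eqref{romre-1}. The feedforward pressure term is meant to cancel the $p_{d_1}$ contribution of the plant. For this cancellation to work, its coefficient must equal $\beta_1/\beta_3$, since the plant has the term $\beta_1 p_{d_1}$. I will read the term in \eqref{sfb-3} in that sense. Likewise, the coupling term in the $U_s$ law must cancel $\beta_5\beta_4\bar v_1$, so its coefficient must be $\beta_4\beta_5/\beta_7$. With these cancellations,
\begin{align*}
\dot{\bar v}_1 &= (\beta_2-\beta_3k_1)\bar v_1 + \beta_3 r_1 + \beta_3\tilde\Delta_1, \\
\dot{\bar u}_3 &= (\beta_6-\beta_7k_3)\bar u_3 + \beta_7 r_3 + \beta_7\tilde\Delta_3.
\end{align*}
Subtracting the reference model \eqref{rm-3}, and writing $a_1 \triangleq \beta_2-\beta_3k_1$ and $a_3 \triangleq \beta_6-\beta_7k_3$, gives
\begin{align*}
\dot e_1 &= a_1 e_1 + \beta_3\tilde\Delta_1, \\
\dot e_3 &= a_3 e_3 + \beta_7\tilde\Delta_3.
\end{align*}
Under the constancy assumption, the adaptation law \eqref{ad-3} yields
\begin{align*}
\dot{\tilde\Delta}_1 &= -\gamma_1 p_1\beta_3 e_1, \\
\dot{\tilde\Delta}_3 &= -\gamma_3 p_3\beta_7 e_3.
\end{align*}
The two channels are now decoupled, so each can be handled separately.

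\textbf{Step 2 (Lyapunov function).} Take
\begin{align*}
V = \tfrac{p_1}{2}e_1^2 + \tfrac{1}{2\gamma_1}\tilde\Delta_1^2 + \tfrac{p_3}{2}e_3^2 + \tfrac{1}{2\gamma_3}\tilde\Delta_3^2,
\end{align*}
which is positive definite and radially unbounded since $p_i,\gamma_i>0$. Differentiating along the error dynamics, the cross terms $p_1\beta_3 e_1\tilde\Delta_1$ and $p_3\beta_7 e_3\tilde\Delta_3$ cancel exactly against the adaptation terms. This cancellation is precisely why the weights $p_i$ and the factors $\beta_3,\beta_7$ appear in \eqref{ad-3}. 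What remains is
\begin{align*}
\dot V = p_1 a_1 e_1^2 + p_3 a_3 e_3^2 \le 0,
\end{align*}
which holds by \eqref{bc}.

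\textbf{Step 3 (conclusion).} Since $V$ is nonincreasing, $V(t)\le V(0)$ for all $t\ge 0$. Hence $e_1, e_3, \tilde\Delta_1, \tilde\Delta_3$ are all bounded, which is the claim. (As a remark, $e_i\in L_2$ and $\dot e_i$ is bounded, so Barbalat's lemma would later give $e_i\to 0$; that is not needed here.)

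\textbf{Main obstacle.} The Lyapunov computation itself is routine. The delicate part is Step 1. It requires, first, the feedforward coefficients in \eqref{sfb-3} to cancel the $p_{d_1}$ and $\bar v_1$ coupling terms exactly, and second, $\Delta_i$ to be constant. If $\Delta_i$ were time-varying, $\dot V$ would acquire the extra term $\tilde\Delta_i\dot\Delta_i/\gamma_i$. Boundedness would then need a robust modification, such as $\sigma$-modification or projection, together with a bound on $\dot\Delta_i$. I would state the constancy assumption explicitly rather than attempt that extension.
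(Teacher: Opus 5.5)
Your proposal is correct and follows the paper's proof essentially step for step. Both use the same error dynamics, the same quadratic Lyapunov function, and exact cancellation of the $\tilde\Delta_i e_i$ cross terms, so that $\dot V \le 0$ yields boundedness. Your explicit assumption $\dot\Delta_i = 0$ is what the paper uses silently in \eqref{d-1}--\eqref{d-3}, and your reading of the feedforward coefficients as $\beta_1/\beta_3$ and $\beta_4\beta_5/\beta_7$ corrects index typos in \eqref{sfb-3} and \eqref{e-1}--\eqref{e-3}. Those equations write $\beta_1-\beta_3k_1$ and $\beta_5-\beta_7k_3$ where $\beta_2-\beta_3k_1$ and $\beta_6-\beta_7k_3$ are meant, so your final $\dot V$, unlike the paper's as printed, actually matches condition \eqref{bc}.
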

\begin{proof}
Following the definitions of the tracking error signals $e_1 \triangleq \bar{v}_1 - {v_r}_1$ and $e_3 \triangleq \bar{u}_3 - {u_r}_3$, we calculate the dynamics of these error signals as:
\begin{align}
\dot e_1 = & \dot {\bar{v}}_1 - \dot{v}_{r_1} = (\beta_1-\beta_3k_1)e_1 \nonumber\\
& + \beta_3\left(\dot m + \Delta_1 + k_1 \bar v_1 - r_1 + \frac{\beta_2}{\beta_3}p_{d1}\right). \label{e-1}
\end{align}
and
\begin{align}
\dot e_3 = & \dot {\bar{u}}_3 - \dot{u}_{r_3} = (\beta_5-\beta_7k_3)e_3 \nonumber\\
&
+ \beta_7\left(U_s + \Delta_3 + k_3 \bar u_3 - r_3 + \frac{\beta_6\beta_4}{\beta_7}\bar{v}_1\right) \label{e-3}
\end{align}

Next, we also compute the expressions for parameter estimation error dynamics as:
\begin{align}
\dot{\tilde \Delta}_1 &= \dot\Delta_1 - \dot{\hat \Delta}_1 = -\gamma_1 p_1 \beta_3 e_1, \label{d-1}\\ 
\dot{\tilde \Delta}_3 &= \dot\Delta_3 - \dot{\hat \Delta}_3 = -\gamma_3 p_3 \beta_7 e_3. \label{d-3}
\end{align}

Now, we choose a quadratic (hence, positive definite) Lyapunov function candidate ($V$) as:
\begin{align}
V
= \frac{p_1}{2}e_1^2 + \frac{p_3}{2}e_3^2
+ \frac{1}{2\gamma_1}\tilde{\Delta}_1^2 + \frac{1}{2\gamma_3}\tilde{\Delta}_3^2. \label{lyap-0}
\end{align}
where we have by design $p_1,p_3,\gamma_1,\gamma_3>0$. Next, we compute the derivative of the Lyapunov function candidate along the error trajectories:
\begin{align}
\dot V
&= p_1 e_1 \dot e_1 + p_3 e_3 \dot e_3
+ \frac{1}{\gamma_1}\tilde\Delta_1\dot{\tilde\Delta}_1
+ \frac{1}{\gamma_3}\tilde\Delta_3\dot{\tilde\Delta}_3. \label{lyap-1}
\end{align}
Substituting \eqref{e-1}-\eqref{d-3} in \eqref{lyap-1}, we get:
\begin{align}
\dot V
&= p_1 e_1\Bigl\{(\beta_1-\beta_3k_1)e_1
\nonumber\\
&+ \beta_3(\dot m + \Delta_1 + k_1 \bar v_1 - r_1 + \frac{\beta_2}{\beta_3}p_{d1})\Bigr\} \nonumber\\
&+ p_3 e_3\Bigl\{(\beta_5-\beta_7k_3)e_3
\nonumber\\
&+ \beta_7(U_s + \Delta_3 + k_3 \bar u_3 - r_3 + \frac{\beta_6\beta_4}{\beta_7}\bar{v}_1)\Bigr\} \nonumber\\
&+ \frac{1}{\gamma_1}\tilde\Delta_1\left(-\gamma_1 p_1 \beta_3 e_1\right)
+ \frac{1}{\gamma_3}\tilde\Delta_3\left(-\gamma_3 p_3 \beta_7 e_3\right). \label{lyap-2}
\end{align}
Now applying the control laws \eqref{sfb-3} in \eqref{lyap-2}, we get
\begin{align}
\dot{V} &=
p_{1} e_{1}\left\{(\beta_{1}-\beta_{3}k_{1})e_{1}
+\beta_{3}\tilde{\Delta}_{1}\right\} - p_{1}\beta_{3}\tilde{\Delta}_{1}e_{1} \nonumber\\
&
+p_{3} e_{3}\left\{(\beta_{5}-\beta_{7}k_{3})e_{3}
+\beta_{7}\tilde{\Delta}_{3}\right\}
- p_{3}\beta_{7}\tilde{\Delta}_{3}e_{3}. \label{lyap-3}
\end{align}
After canceling the $\tilde{\Delta}$ terms in \eqref{lyap-3}, we have:
\begin{align}
    \dot V
&= p_1(\beta_1-\beta_3k_1)e_1^2 + p_3(\beta_5-\beta_7k_3)e_3^2. \label{lyap-22}
\end{align}
If the conditions \eqref{bc} are satisfied, we have $\dot V \leqslant 0$, which implies $e_1,e_3,\tilde\Delta_1,\tilde\Delta_3$ remain bounded for $t \geqslant 0$ \cite{khalil_nonlinear_2002}.
\end{proof}

Note that \textit{Proposition 1} only proves the boundedness of the error signals. For the asymptotic stability of the error signals, we need further analysis as presented in the next proposition.

\vspace{1mm}

\begin{propp} [Asymptotic stability of reference model tracking and parameter estimation error signals]
Consider the extrusion-based 3D printing system described by \eqref{romre-1}-\eqref{romre-3}, the state feedback control law \eqref{sfb-3}, the parameter adaptation law \eqref{ad-3}, and the reference model \eqref{rm-3}. Then, the conditions \eqref{bc} will also ensure that the tracking and error signals will be asymptotically stable, that is, $e_1,e_3,\tilde\Delta_1,\tilde\Delta_3 \rightarrow 0$ as $t\rightarrow \infty$.
\end{propp}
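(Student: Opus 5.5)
The plan is to upgrade the boundedness conclusion of Proposition~1 to convergence in two stages: first Barbalat's lemma gives $e_1,e_3\to 0$, and then a second application of Barbalat's lemma to the derivatives $\dot e_1,\dot e_3$ gives $\tilde\Delta_1,\tilde\Delta_3\to 0$. Throughout, I take $\Delta_1,\Delta_3$ to be constant, which is already implicit in \eqref{d-1}--\eqref{d-3}. I also assume the references $r_1,r_3$ are bounded.

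First I will write the closed-loop error dynamics consistently with the reference model \eqref{rm-3}. Substituting \eqref{sfb-3} into \eqref{romre-1}--\eqref{romre-3} and subtracting \eqref{rm-3} gives
\begin{align*}
\dot e_1 &= a_1 e_1 + \beta_3\tilde\Delta_1, \qquad a_1 \triangleq \beta_2-\beta_3k_1<0,\\
\dot e_3 &= a_3 e_3 + \beta_7\tilde\Delta_3, \qquad a_3 \triangleq \beta_6-\beta_7k_3<0.
\end{align*}
The Lyapunov computation of Proposition~1 then yields $\dot V = p_1 a_1 e_1^2 + p_3 a_3 e_3^2 \le 0$. From this:
\begin{itemize}
\item $V$ is nonincreasing and bounded below, so $e_i,\tilde\Delta_i\in\mathcal L_\infty$.
\item $V$ has a finite limit, so $\int_0^\infty \big(-p_1a_1e_1^2-p_3a_3e_3^2\big)\,dt \le V(0)$, i.e.\ $e_1,e_3\in\mathcal L_2$.
\item By the error equations above, $\dot e_1,\dot e_3\in\mathcal L_\infty$.
\end{itemize}
Hence $\ddot V = 2p_1a_1e_1\dot e_1+2p_3a_3e_3\dot e_3$ is bounded, so $\dot V$ is uniformly continuous. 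Barbalat's lemma then gives $\dot V\to0$, hence $e_1,e_3\to0$.

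Second, to get $\tilde\Delta_i\to0$, I will apply Barbalat's lemma again, this time to $e_i$ itself. We know $e_i\to0$, so $e_i$ has a finite limit. By \eqref{d-1}--\eqref{d-3},
\[
\ddot e_1 = a_1\dot e_1 - \gamma_1p_1\beta_3^2 e_1,
\]
which is bounded, and similarly for $\ddot e_3$. So $\dot e_i$ is uniformly continuous, and Barbalat's lemma gives $\dot e_i\to0$. Returning to the error equation, $\beta_3\tilde\Delta_1 = \dot e_1 - a_1e_1 \to 0$, and since $\beta_3\neq0$ (needed anyway for the control law to be defined), $\tilde\Delta_1\to0$. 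The same argument gives $\tilde\Delta_3\to0$. This step is special to the scalar, constant-uncertainty structure: the regressor multiplying $\tilde\Delta_i$ is the nonzero constant $\beta_3$ or $\beta_7$, so persistent excitation holds trivially.

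The main obstacle is the consistency of the closed-loop error dynamics. The displayed \eqref{e-1}--\eqref{e-3} and \eqref{lyap-22} carry coefficients $\beta_1,\beta_5$ where the reference model and condition \eqref{bc} involve $\beta_2,\beta_6$. Also, the cancellation in the $\bar u_3$ channel requires the control law's $\bar v_1$ term to match the coupling coefficient $\beta_5\beta_4$ in \eqref{romre-3}. I will first redo the substitution carefully so that the error equations take exactly the form above, with the drift matching the reference model and \eqref{bc} giving $a_i<0$. If the $p_{d1}$ or $\bar v_1$ compensation terms do not cancel exactly as written, I would interpret the control law as designed to cancel the corresponding model terms, i.e.\ with the coefficients chosen to match \eqref{romre-1}--\eqref{romre-3}. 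Once that algebra is settled, the Barbalat arguments above are routine.
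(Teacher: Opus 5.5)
Your proof is correct and follows essentially the paper's route. Barbalat's lemma applied to $\dot V$ (using boundedness of $\ddot V$) gives $e_1,e_3\to 0$; a second application to $e_i$ (using boundedness of $\ddot e_i$) gives $\dot e_i\to 0$, and then $\tilde\Delta_i\to 0$ follows from the error equations since $\beta_3,\beta_7\neq 0$. Your choice of drift coefficients $\beta_2-\beta_3k_1$ and $\beta_6-\beta_7k_3$, with the control-law compensation terms matching $\beta_1 p_{d1}$ and $\beta_5\beta_4\bar v_1$, is the one consistent with \eqref{rm-3} and \eqref{bc}; the $\beta_1,\beta_5$ appearing in \eqref{e-1}, \eqref{e-3} and \eqref{lyap-22} are index slips in the paper.
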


\begin{proof}
Consider the same Lyapunov function \eqref{lyap-0}. By construction, $V$ has a finite lower bound. Furthermore, \textit{Proposition 1} has already established $\dot V \leqslant 0$. This indicates the existence of the limiting value $\lim_{t\rightarrow \infty} {V}$.

Next, we focus on the expression of $\ddot{V}$, given as:
\begin{align}
    & \ddot{V} = p_1(\beta_1-\beta_3k_1)e_1\dot{e}_1 + p_3(\beta_5-\beta_7k_3)e_3\dot{e}_3. \label{bb-0}
\end{align}
From the expressions of $\dot{e}_1$ and $\dot{e}_3$: 
\begin{align}
    & \dot{e}_1 = (\beta_{1}-\beta_{3}k_{1})e_{1} +\beta_{3}\tilde{\Delta}_{1}, \label{bb-1}\\
    & \dot{e}_3 = (\beta_{5}-\beta_{7}k_{3})e_{3} +\beta_{7}\tilde{\Delta}_{3}, \label{bb-3}
\end{align}
we can conclude that $\dot{e}_1$ and $\dot{e}_3$ are upper bounded since all the variables on the right hand sides of \eqref{bb-1}-\eqref{bb-3} are bounded (as proven in \textit{Proposition 1}). Consequently, all the right hand side terms of \eqref{bb-0} are also upper bounded, ultimately making $\ddot{V}$ upper bounded. Hence, using Barbalat's lemma \cite{slotine1991applied}, we can conclude that $\lim_{t\rightarrow \infty} \dot{V} \rightarrow 0$. Now, following the expression of $\dot V$ in \eqref{lyap-22}, we can conclude that $e_1,e_3 \rightarrow 0$ as $t \rightarrow \infty$ since $\lim_{t\rightarrow \infty} \dot{V} \rightarrow 0$.

Now, we know that the limits $\lim_{t\rightarrow \infty} \dot{e}_1 = -e_1(0)$ and $\lim_{t\rightarrow \infty} \dot{e}_3 = -e_3(0)$ exist. Furthermore, we consider the expressions of the signals $\ddot{e}_1,\ddot{e}_3$:
\begin{align}
    & \ddot{e}_1 = (\beta_{1}-\beta_{3}k_{1})((\beta_{1}-\beta_{3}k_{1})e_{1} +\beta_{3}\tilde{\Delta}_{1}) -\beta_{3}(\gamma_1 p_1 \beta_3 e_1), \label{bb-11}\\
    & \ddot{e}_3 = (\beta_{5}-\beta_{7}k_{3})((\beta_{5}-\beta_{7}k_{3})e_{3} +\beta_{7}\tilde{\Delta}_{3}) -\beta_{7}(\gamma_3 p_3 \beta_7 e_3), \label{bb-13}
\end{align}
Since all the right hand side terms are bounded in the above expressions, we can conclude the boundedness of $\ddot{e}_1,\ddot{e}_3$. Hence, using Barbalat's lemma \cite{slotine1991applied}, we can conclude that $\lim_{t\rightarrow \infty} \dot{e}_1,\dot{e}_3 \rightarrow 0$. Now, consider the expressions \eqref{bb-1}-\eqref{bb-3} where $e_1,e_3,\dot{e}_1,\dot{e}_3 \rightarrow 0$ as $t\rightarrow \infty$. Hence, we can conclude that $\tilde{\Delta}_{1},\tilde{\Delta}_{3} \rightarrow 0$ as $t\rightarrow \infty$.
\end{proof}

\begin{remm}
    In the control design, there are four parameters to be determined: feedback control gains $k_1,k_3$ and parameter adaptation gains $\gamma_1,\gamma_3$. The conditions on the feedback gains are given in \eqref{bc} while the parameter adaptation gains $\gamma_1,\gamma_3$ are chosen to be positive. It is observed that higher the learning rates, faster the convergence of the adaptation. However, such faster convergence is typically associated with high magnitude oscillations in the transient phase. Hence, the parameter adaptation gains are chosen to balance these two factors: faster convergence and acceptable transient oscillations. Similarly, higher values of feedback control gains will also lead faster tracking convergence, but at the cost of amplifying noise. These are chosen to ensure acceptable tracking convergence without significant amplification of noise.
\end{remm}






\section{Results and Discussion}

This section evaluates the performance of the proposed adaptive controller in Section~3. In order to carry out realistic simulations, we utilize a comprehensive \textit{plant model} incorporating input-dependent parameterization (eq. \eqref{param-1}-\eqref{param-2}) as well as additive model uncertainties informed by computational fluid dynamics studies (as performed in \cite{looey2025physics}) of 3D printing scenario under consideration. The model uncertainties are generated from the ensemble probability distribution shown in Fig. \ref{fig:model-unc}, from data collected in \cite{looey2025physics}. Based on the aforementioned setup, we now perform a few representative case studies to evaluate the adaptive controller's performance.

\textbf{Case Study 1 -- Uncertain plate velocity:} 

In order to evaluate the adaptive controller's performance under varying build plate velocity, we assume that plate velocity drops abruptly by a step input disturbance with the size of $40\%$ of the plate velocity obtained from simulation results, and then increased by the same step size after $30$ seconds, which directly affects the flow velocity of sub-system 3. Fig. \ref{fig:adaptive_case1} shows model reference tracking uncertainty and its estimation. The top right sub-figure illustrates the effect of injecting a disturbance on fluctuations in $\bar{u}_3$ and that sub-system 3 is tracking reference model trajectories with small tracking error ($e_3$). Following disturbance injection, after $t=30$ s, the disturbance causes a temporary fluctuation of $\bar{u}_3$. However, fluctuations decrease along time, leading to a perfect overlap of the plant state and the reference state before $t=60$ s. The bottom right plot shows the piecewise-constant disturbance profile after injecting a step disturbance input to the system and the estimation of the disturbance. The estimation ($\hat \Delta_3$) responds to the disturbance injection abruptly with fluctuations oscillating within a range less than the magnitude of the disturbance and approaches the actual value of the disturbance until another disturbance injection at $t=60$ s. The degree of fluctuations and the speed of adaptation depend on the size of the adaptation rate. The top and bottom left plots show that disturbance input does not have any significant effect on the sub-system 1 dynamics compared to the case without disturbance injection.

\begin{figure}[h!]
    \centering
    \includegraphics[width=0.6\textwidth]{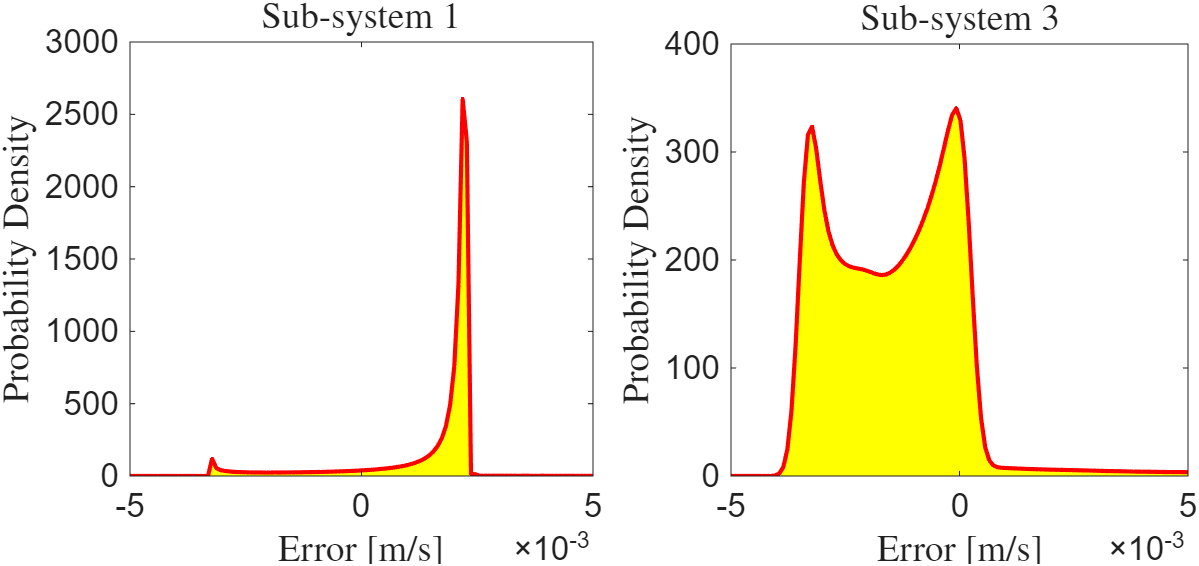}
    \caption{Probability distribution of modeling error.}
    \label{fig:model-unc}
\end{figure}

\begin{figure}[h!]
    \centering
    \includegraphics[width=0.6\textwidth]{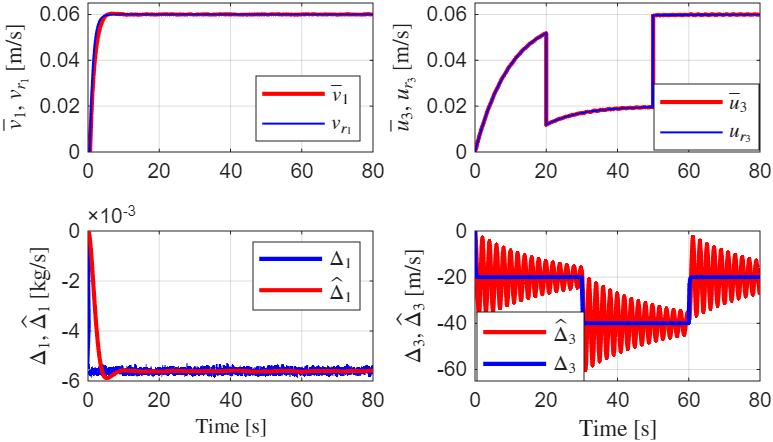}
    \caption{Responses from Case Study 1: Tracking performance under plate velocity disturbance.}
    \label{fig:adaptive_case1}
\end{figure}


\textbf{Case Study 2 -- Uncertain mass flow rate:} 

In this case, a ramp input disturbance with negative slope is applied to sub-system 1 to simulate an undesired decrease of input mass flow rate. Top and bottom left plots in Fig. \ref{fig:adaptive_case2}, demonstrate the effect of the applied disturbance on sub-system 1, while top and bottom right plots indicate the effect of the ramp input disturbance on sub-system 3. The top left plot demonstrates the robustness of adaptive control algorithm to disturbance injection to sub-system 1 by showing perfect tracking of reference trajectories. At the beginning of the ramp, disturbance induces a small and abrupt deviation in the plant state, followed by a more rapid recovery compared to the case, where the disturbance is injected to sub-system 3, driven by the adaptation law and a higher adaptation gain. Similar to Case 1, the injection of a disturbance into sub-system 1 does not affect the dynamics of sub-system 3, as shown in the top and bottom right plots.


\begin{figure}[h!]
    \centering
    \includegraphics[width=0.6\textwidth]{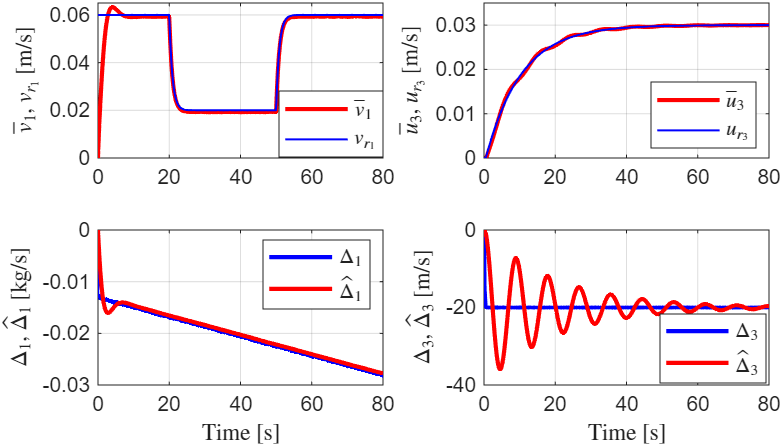}
    \caption{Responses from Case Study 2: Tracking performance under inlet mass flow disturbance.}
    \label{fig:adaptive_case2}
\end{figure}

    


\textbf{Case Study 3 -- Comparison with non-adaptive controller:} 

Here, we compare the adaptive control performance with a non-adaptive controller. The non-adaptive controller is assumed to have knowledge of the disturbances in the beginning of time but does not adapt to the changing nature of the disturbance (that is, the adaptation gains are set to zero, $\gamma_1,\gamma_3 = 0$ in \eqref{ad-3}). For both of these controllers, disturbances are introduced by variations in plate velocity or changes in the mass flow rate. Fig. \ref{fig:non adaptive case 1} shows the responses for the plate-velocity disturbance case by comparing the adaptive controller with the non-adaptive controller for sub-system 1 and sub-system 3. In both sub-systems, the adaptive controller maintains accurate tracking of the reference model even after the disturbance is introduced. In contrast, when adaptation is disabled, the plant response deviates from the reference model.

\begin{figure}[h!]
    \centering
    \includegraphics[width=0.6\textwidth]{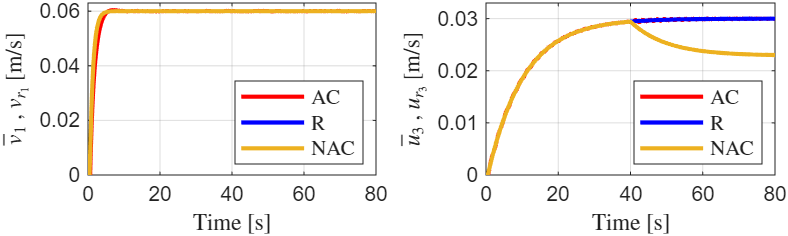}
    \caption{Comparison of adaptive (AC) and non-adaptive (NAC) controllers for a reference (R) under a plate-velocity disturbance.}
    \label{fig:non adaptive case 1}
\end{figure}

\begin{figure}[h!]
    \centering
    \includegraphics[width=0.6\textwidth]{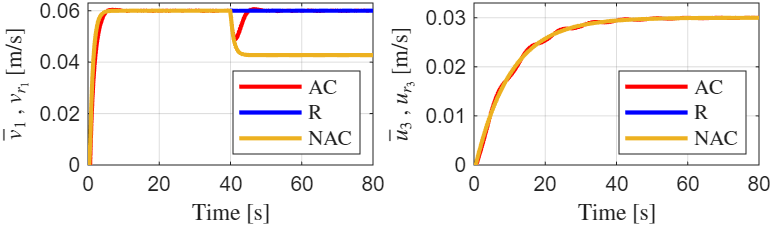}
    \caption{Comparison of adaptive (AC) and non-adaptive (NAC) controllers for a reference (R) under a inlet mass flow disturbance.}
    \label{fig:non adaptive case 2}
\end{figure}

Similarly, Fig. \ref{fig:non adaptive case 2} shows the responses for the case with disturbance to $\dot m$, comparing the adaptive controller with the non-adaptive controller for both sub-systems. Similar to the previous case with plate-velocity disturbance, in this case the adaptive controller shows perfect tracking of the reference model throughout the time, while non-adaptive controller deviates from the reference after disturbance injection. It can be seen in both disturbance cases that the system with the non-adaptive controller stabilizes after a while and remains constant.This behavior demonstrates that the system with a fixed-gain controller is stable and error is bounded; however, there is no adaptation and the system cannot recover after a disturbance injection.  





\textbf{Adaptive control performance under various conditions:} 

Here, we summarize the adaptive control performance under varying magnitudes of disturbances: Cases 1-5 consider a step-like disturbance in the inlet mass flow rate at $t=40$s with various step magnitudes $\delta_m$, while Cases 6-10 consider a step-like disturbance in the plate velocity at $t=40$s with various step magnitudes $\delta_p$. The performance is evaluated by two metrics: \textit{(i) reference tracking convergence time ($t_{c_r}$)} and \textit{(ii) parameter adaptation convergence time ($t_{c_p}$)}. These results are tabulated in Table \ref{tab:ac_perf}. The numbers show a shorter reference tracking convergence time for sub-system 1 compared to sub-system 3, suggesting that the tuned adaptation gains lead to an uniform learning speed and the controller performs faster on the nozzle-side. In contrast, the longer parameter adaptation time for sub-system 3 suggests that the controller stabilizes reference tracking quickly, but the uncertainty estimation converges more slowly, which is because of more complex dynamics of this sub-system. 
\begin{table}[h!]
  \centering
  \caption{Adaptive control performance under mass-flow and plate-velocity disturbances.}
  \label{tab:ac_perf}
  \begin{tabular}{lccc}
    \toprule
    \textbf{Case} & $\delta_m$ [kg/s] & \textbf{$t_{cp}$ [s]} & \textbf{$t_{cr}$ [s]} \\
    \midrule
   
    Case 1  & $+0.0025$ & 7.97  & 4.87 \\
    Case 2  & $-0.0025$ & 8.64  & 4.77 \\
    Case 3  & $-0.0050$ & 8.51  & 5.60 \\
    Case 4  & $-0.0075$ & 8.80  & 5.51 \\
    Case 5  & $-0.0100$ & 8.90  & 5.67 \\
    \midrule
    \textbf{Case} & $\delta_p$ [m/s] & \textbf{$t_{cp}$ [s]} & \textbf{$t_{cr}$ [s]} \\
    \midrule
    Case 6  & $-20$     & 30.97 & 2.11 \\
    Case 7  & $-10$     & 31.00 & 1.80 \\
    Case 8  & $-30$     & 27.50 & 2.34 \\
    Case 9  & $-40$     & 31.00 & 15.40 \\
    Case 10 & $+10$     & 31.00 & 2.43 \\
    \bottomrule
  \end{tabular}
\end{table}

\section{Conclusions}

This paper presents a model-reference adaptive control (MRAC) framework that can robustly regulate nozzle-side flow velocity ($\bar{v}_1$), as well as the flow velocity on the build plate ($\bar{u}_3$) in direct ink writing of cement-based materials under uncertainties and disturbances. The boundedness and asymptotic convergence of reference tracking and parameter estimation errors are guaranteed by a Lyapunov-based design. The realistic disturbances were applied to the inlet mass flow rate and the build plate velocity and a close reference tracking and recovery after disturbance injection proved the high performance of the controller. Comparisons with the non-adaptive controller highlight that although a fixed-gain controller's tracking error is bounded, the controller responses do not follow the reference signal after the disturbance injection due to the lack of adaptation mechanism. Finally, the comparison of the convergence times demonstrated that the parameter adaptation requires longer convergence time due to more complex coupled dynamics of sub-system 3.
\bibliographystyle{ieeetr}
\bibliography{ifacconf}             
                                                   







\end{document}